\newcommand{\aut}[1]{{\mathcal #1}}
\newcommand{\dual}[1]{{\mathfrak d}({#1})}
\newcommand{\inverse}[1]{{#1}^{-1}}
\newcommand{\mot}[1]{{\mathbf {#1}}}
\newcommand{\pres}[1]{\langle{#1}\rangle}
\newcommand{\presm}[1]{\pres{{#1}}_{+}}
\newcommand{\AAA}{A}
\newcommand{\MMM}{M}
\newcommand{\N}{{\mathbb N}}
\newcommand{\Z}{{\mathbb Z}}
\newcommand{\xxrightarrow}[1]{{\xrightarrow{~#1~}}}
\newlength{\textlarg}
\newcounter{thesame}
\title{A Characterization of those Automata\\
that Structurally Generate Finite Groups}
\author{Ines Klimann$^\fnsymbol{thesame}$ \and Matthieu Picantin\thanks{Both authors are
partially supported by the french
\emph{Agence Nationale pour la~Recherche}, through the Project MealyM
ANR-JCJC-12-JS02-012-01.}}
\institute{Univ Paris Diderot, Sorbonne Paris Cit\'e, LIAFA,\\
    UMR 7089 CNRS, F-75013 Paris, France\\
\email{\{klimann,picantin\}@liafa.univ-paris-diderot.fr}}
\begin{document}

\maketitle

\begin{abstract}
Antonenko and~Russyev independently have shown that any Mealy automaton with no cycles with exit---that is,
where every cycle in the underlying directed graph is a sink component---generates
a finite (semi)group, regardless of the choice of the production functions.
Antonenko has proved that this constitutes a characterization in the
non-invertible case and asked for the invertible case, which is proved
in this paper.
\end{abstract}

\begin{keywords}
automaton groups, Mealy automata, finiteness problem
\end{keywords}

\section{Introduction}

The class of automata (semi)groups contains multiple interesting and complicated (semi)groups
with sometimes unusual features~\cite{bs}.

In the last decades, the classical decision problems have been investigated for such (semi)groups. The 
word problem is solvable using standard minimization
techniques, while the conjugacy problem is undecidable for automata groups~\cite{conjugacy}.
Of special interest for our concern here, the finiteness problem was proved to be undecidable
for automata semigroups~\cite{gil13} and remains open for automata groups
despite several positive and promising results~\cite{AKLMP12,anto,sidkiconjugacy,cain,Kli13,KMP12,mal,min,sidki,sst}.

The family of automata with no cycles with exit was investigated by~Antonenko and by~Russyev
independently. Focused on the invertible case, Russyev stated in~\cite{russ}
that any invertible Mealy automata with no cycles with exit generates a finite group.
Meanwhile, Antonenko showed in~\cite{anto} (see also~\cite{antoberk})
the same result in the non-invertible case and proved the following maximality result:
for any automaton with at least one cycle with exit, it is possible to choose
(highly non-invertible) production functions such that the semigroup generated
by the induced Mealy automaton is infinite.

In this paper, we fill the visible gap by extending
the aforesaid maximality result to the invertible case:
for any automaton with at least one cycle with exit, it is possible to choose
invertible production functions such that the group generated
by the induced Mealy automaton is infinite.

The proof of this new result makes use of original arguments for the current framework,
whose common idea is to put a special emphasis on the \emph{dual automaton},
obtained by exchanging the roles of the stateset and the alphabet.
Thereby it continues to validate the general strategy first suggested
in the paper~\cite{AKLMP12}, then followed and continuously developed in~\cite{Kli13,KMP12}.

The new maximality result
provides a precious milestone in
the ongoing work by De~Felice and~Nicaud (see~\cite{DFN13} for a first paper)
who propose to design random generators for finite groups
based on those invertible Mealy automata with no cycles with exit.
Their aim is to simulate interesting distributions that might offer a wide diversity of different finite groups
by trying to avoid the classical concentration phenomenon around a typical object,
namely symmetric or alternating groups~\cite{Dix69,JZP11}, which is significant in already studied distributions.
Once implemented, such generators would be very useful to test the performance
and robustness of algorithms from computational group theory.
They would also be of great use when trying to check a conjecture, by testing it on various random inputs, since exhaustive tests are impossible due to a combinatorial explosion.

The structure of the paper is the following. Basic notions
  on Mealy automata and automaton (semi)groups are presented in
  Section~\ref{sec:mealy}. In Section~\ref{sec:maximality}, we
introduce new tools and prove the main result.

\section{Mealy Automata}\label{sec:mealy}
This section contains material for the proofs: first classical definitions
and then considerations already made in~\cite{Kli13} to maintain
the paper self-contained.

\subsection{Automaton Groups and Semigroups}
If one forgets initial and final states, a {\em
(finite, deterministic, and complete) automaton} $\aut{\AAA}$ is a
triple 
\[
\bigl( A,\Sigma,\delta = (\delta_i: A\rightarrow A )_{i\in \Sigma} \bigr)\enspace,
\]
where the \emph{stateset}~$A$
and the \emph{alphabet}~$\Sigma$ are non-empty finite sets, and
where the $\delta_i$ 
are functions called \emph{transition functions}.

The transitions of such an automaton are
\[x\xxrightarrow{~~i~~} \delta_i(x)\enspace.
\]

An automaton is \emph{reversible} if all its transition functions are
permutations of the stateset. Note that in this case each state has
exactly one incoming transition labelled by each letter.

\smallskip

A \emph{Mealy automaton} is a quadruple 
\[
\bigl( A, \Sigma, \delta = (\delta_i: A\rightarrow A )_{i\in \Sigma},
\rho = (\rho_x: \Sigma\rightarrow \Sigma  )_{x\in A} \bigr)\enspace,
\]
such that both $(A,\Sigma,\delta)$ and $(\Sigma,A,\rho)$ are
automata. 
In other terms, a Mealy automaton is a letter-to-letter transducer
with the same input and output alphabet. If
\(\aut{\AAA}=(A,\Sigma,\delta)\) is an automaton and \(\rho=(\rho_x:
\Sigma\rightarrow \Sigma)_{x\in A}\) is a finite sequence of
functions, we denote by~\((\aut{\AAA},\rho)\) the Mealy automaton
\((A,\Sigma,\delta,\rho)\) and we say that \(\aut{\AAA}\) is
\emph{enriched with}~\(\rho\).
The graphical representation of a Mealy automaton is
standard, see~Fig.~\ref{fig:Mealy-automaton}.

\begin{figure}[h]
\centering
\begin{tikzpicture}[->,>=latex,node distance=1.8cm]
\tikzstyle{every state}=[minimum size=12pt,inner sep=2pt]
\node[state] (0) {\(y\)};
\node[state] (1) [right of=0] {\(x\)};
\path (0) edge[loop left] node[above]{\(0|0\)} node[below]{\(1|1\)} (0)
      (1) edge node[above] {\(0|1\)} (0)
      (1) edge[loop right] node{\(1|0\)} (1);    
\end{tikzpicture}
\caption{An example of a Mealy automaton: the so-called adding machine.}\label{fig:Mealy-automaton}
\end{figure}
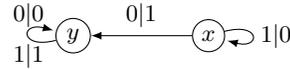

The transitions of a Mealy automaton are
\[x\xxrightarrow{i | \rho_x(i)} \delta_i(x)\enspace.
\]

A Mealy automaton \(\aut{\MMM}=(A,\Sigma,\delta, \rho)\) is
\emph{reversible\/} if the automaton \((A,\Sigma,\delta)\) is
reversible and \emph{invertible\/} if the functions
\(\rho_x\) are permutations of the alphabet. In this latter case, its
\emph{inverse} is the Mealy automaton~\(\inverse{\aut{\MMM}}\)
with stateset \(A^{-1}=\{x^{-1},x\in A\}\) and set of transitions
\begin{equation*}
x^{-1} \xxrightarrow{j\mid i} y^{-1} \ \in \aut{\MMM}^{-1} \quad \iff
\quad x \xxrightarrow{i\mid j} y \ \in \aut{\MMM} \:.
\end{equation*}
A Mealy automaton \(\aut{\MMM}\) is \emph{bireversible} if both
\(\aut{\MMM}\) and \(\inverse{\aut{\MMM}}\) are invertible and reversible.

\smallskip

In a Mealy automaton $\aut{\MMM}=(A,\Sigma, \delta, \rho)$, the sets $A$ and
$\Sigma$ play dual roles. So we may consider the \emph{dual (Mealy)
  automaton} defined by
\[
\dual{\aut{\MMM}} = (\Sigma,A, \rho, \delta)\enspace,
\] see an example on Fig.~\ref{fig:dual}.
Obviously, a Mealy automaton is reversible if and only if its dual is
invertible.

\begin{figure}[h]
\centering
\begin{tikzpicture}[->,>=latex,node distance=1.8cm]
\tikzstyle{every state}=[minimum size=12pt,inner sep=2pt]
\node[state] (0) {\(0\)};
\node[state] (1) [right of=0] {\(1\)};
\path (0) edge[loop left] node{\(y|y\)} (0)
      (1) edge[bend left] node[below] {\(x|x\)} (0)
      (0) edge[bend left] node[above] {\(x|y\)} (1)
      (1) edge[loop right] node{\(y|y\)} (1);    
\end{tikzpicture}
\caption{The dual automaton of the Mealy automaton of Fig.~\ref{fig:Mealy-automaton}.}\label{fig:dual}
\end{figure}
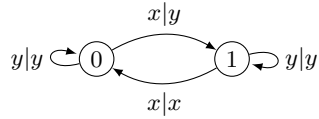

\smallskip

Let $\aut{\MMM} = (A,\Sigma, \delta,\rho)$ be a Mealy automaton. 
We view $\aut{\MMM}$ as an automaton with an input and an output tape, thus
defining mappings from input words over~$\Sigma$ to output words
over~$\Sigma$. 
Formally, for $x\in A$, the map
$\rho_x : \Sigma^* \rightarrow \Sigma^*$,
extending~$\rho_x : \Sigma \rightarrow \Sigma$, is defined by:
\begin{equation*}
\forall i \in \Sigma, \ \forall \mot{s} \in \Sigma^*, \qquad
\rho_x(i\mot{s}) = \rho_x(i)\rho_{\delta_i(x)}(\mot{s}) \enspace.
\end{equation*}

By convention, the image of the empty word is itself.
The mapping~$\rho_x$ is length-preserving and prefix-preserving.
We say that $\rho_x$ is the \emph{production
function\/} associated with \(x\) in~$\aut{\MMM}$ or, 
more briefly, if there is no ambiguity,
the \emph{production function\/} of \(x\).
For~$\mot{u}=x_1\cdots x_n \in A^n$ with~$n>0$, we set
\(\rho_\mot{u}: \Sigma^* \rightarrow \Sigma^*, \rho_\mot{u} = \rho_{x_n}
\circ \cdots \circ \rho_{x_1}\).

Denote dually by $\delta_i:A^*\rightarrow A^*,
i\in \Sigma$, the production functions associated with
the dual automaton
\(\dual{\aut{\MMM}}\). For~$\mot{s}=i_1\cdots i_n
\in \Sigma^n$ with~$n>0$, we set $\delta_\mot{s}: A^* \rightarrow A^*,
\ \delta_\mot{s} = \delta_{i_n}\circ \cdots \circ \delta_{i_1}$. 

\smallskip

The semigroup of mappings from~$\Sigma^*$ to~$\Sigma^*$ generated by
$\rho_x, x\in A$, is called the \emph{semigroup generated
by~$\aut{\MMM}$} and is denoted by~$\presm{\aut{\MMM}}$.
When $\aut{\MMM}$ is invertible,
its production functions are
permutations on words of the same length and thus we may consider
the group of mappings from~$\Sigma^*$ to~$\Sigma^*$ generated by
$\rho_x, x\in A$; it is called the \emph{group generated
by~$\aut{\MMM}$} and is denoted by~$\pres{\aut{\MMM}}$.

The automaton of Fig.~\ref{fig:Mealy-automaton} generates the
semigroup~\(\N\) and the group~\(\Z\). The orbit of the
word~\(0^n\) under the action of \(\rho_x\) is of size \(2^n\):
it acts like a binary addition until \(1^n\) (considering the most
significant bit on the right). In fact the Mealy automaton of
Fig.~\ref{fig:Mealy-automaton} is called the \emph{adding
  machine}~\cite{gns}.

\medskip

\newcommand{\semigroup}{(\({\mathbf{F2}}\))\xspace}
\newcommand{\prune}{(\({\mathbf{F1}}\))\xspace}
\newcommand{\duality}{(\({\mathbf{F3}}\))\xspace}
\newcommand{\justir}{(\({\mathbf{F4}}\))\xspace}

Remind some known facts on finiteness of the automaton (semi)group:
\begin{enumerate}[\((i)\)]
\item[\prune] 
To prune a Mealy automaton by deleting its states which are not
reachable from a cycle (see the precise definition of a cycle in
Subsection 3.1) does not change the finiteness or infiniteness of the
generated (semi)group.
\item[\semigroup] An invertible Mealy automaton generates a finite group if and
  only if it generates a finite semigroup~\cite{AKLMP12,svv}.
\item[\duality] A Mealy automaton generates a finite semigroup if and only if so
  does its dual~\cite{AKLMP12,nek,sv11}.
\item[\justir] An invertible-reversible but not bireversible Mealy automaton
  generates an infinite group~\cite{AKLMP12}.
\end{enumerate}

Whenever the alphabet is unary, the generated group is trivial and there is nothing to say. 
{\bf Throughout this paper, the alphabet has at least two elements.}

\subsection{On the Powers of a Mealy Automaton and its Connected Components}
Let \(\aut{\MMM}=(A,\Sigma,\delta,\rho)\) be a Mealy automaton.

Considering the underlying graph of \(\aut{\MMM}\), it makes sense
to look at its connected components. If \(\aut{\MMM}\) is reversible, its
connected components are always strongly connected (its transition
functions are permutations of a finite set).

A convenient and natural operation is to raise~\(\aut{\MMM}\)
to the power~\(n\), for some~\(n>0\): its \emph{\(n\)-th power} is the
Mealy automaton
\begin{equation*}
\aut{\MMM}^n = \bigl( \ A^n,\Sigma, (\delta_i : A^n \rightarrow
A^n)_{i\in \Sigma}, (\rho_{\mot{u}} : \Sigma \rightarrow \Sigma
)_{\mot{u}\in A^n} \ \bigr)\enspace.
\end{equation*}
If \(\aut{\MMM}\) is reversible, so is each of its powers.

\medskip

If \(\aut{\MMM}\) is reversible, we can be more precise on the behavior
of the connected components of its powers. As highlighted
in~\cite{Kli13}, they have a very peculiar form: if \(\aut{C}\) is a
connected component of~\(\aut{\MMM}^n\) for some~\(n\) and \(\mot{u}\) is a
state of~\(\aut{C}\), we obtain a connected component
of~\(\aut{\MMM}^{n+1}\) by choosing a state~\(x\in A\) and building the
connected component of~\(\mot{u}x\), denote it by~\(\aut{D}\). For any
state~\(\mot{v}\) of~\(\aut{C}\), there exists a state of~\(\aut{D}\)
prefixed with~\(\mot{v}\):
\[\exists\mot{s}\in\Sigma^*\mid \delta_{\mot{s}}(\mot{u}) =
\mot{v}\quad \text{and}\quad  \delta_{\mot{s}}(\mot{u}x) =
\mot{v}\delta_{\rho_{\mot{u}}(\mot{s})}(x)\enspace.\]

Furthermore, if \(\mot{u}y\) is a state of~\(\aut{D}\), for some
state~\(y\in A\) different from~\(x\), then \(\delta_{\mot{s}}(\mot{u}x)\) and
\(\delta_{\mot{s}}(\mot{u}y)\) are two different states of~\(\aut{D}\)
prefixed with~\(\mot{v}\),
because of the reversibility of~\(\aut{\MMM}^{n+1}\): the transition
function~\(\delta_{\rho_{\mot{u}}(\mot{s})}\) is a permutation.

Hence \(\aut{D}\) can be seen as consisting of several full
copies of~\(\aut{C}\) and \(\#\aut{C}\) divides~\(\#\aut{D}\). They
have the same size if and only if, once fixed some state~\(\mot{u}\)
of~\(\aut{C}\), for any different states~\(x,y\in A\), \(\mot{u}x\) and
\(\mot{u}y\) cannot both belong to \(\aut{D}\).

If all of those connected components of~\(\aut{\MMM}^{n+1}\) built
from~\(\aut{C}\) have the same size as~\(\aut{C}\), we say that
\(\aut{C}\) \emph{splits up totally}. If all the connected components
of an automaton split up totally, we say that the automaton
\emph{splits up totally}.

\section{A Maximal Family for Groups}\label{sec:maximality}

Antonenko and Russyev both investigated a family
of Mealy automata such that the finiteness of the generated
group (Russyev~\cite{russ}) or semigroup (Antonenko~\cite{anto}) is
inherent to the structure of the automaton, 
regardless of its production functions. In fact, though they use
different definitions and names, they study the same
family: automata where every cycle is a sink component. Antonenko has
proved that this family is maximal in the non-invertible case: if an automaton
admits a cycle which is not a sink component, it can be
  enriched to generate an infinite semigroup.

To prove his result, Antonenko analyzes different cases and, in each
situation, exhibits an element of infinite order in the semigroup. 
In this section we prove the maximality of the former family for
groups, using completely different techniques.
We adopt and adapt Russyev's nomenclature.

\subsection{How to Exit from a Cycle?}\label{subs:cycle}
Let \(\aut{A} = (A,\Sigma,\delta)\) be an automaton. A \emph{cycle} of
length \(n\in\N\) in the
automaton~\(\aut{A}\) is a sequence of transitions of~\(\aut{A}\)
\[x_1\xxrightarrow{i_1}x_2,\quad\ldots,\quad
x_{n-1}\xxrightarrow{i_{n-1}}x_n, \quad
x_n\xxrightarrow{i_n}x_1\]
where \(x_1,\ldots,x_n\) are 
pairwise different states in~\(A\) and \(i_1,\ldots,i_n\) are
some letters of~\(\Sigma\).

The \emph{label} of this cycle \emph{from the state \(x_k\)} is the word
\(i_k\cdots i_ni_1\cdots i_{k-1}\).

This cycle is \emph{with external exit} if there
exist~\(k\) with~\(1\leq k\leq n\) and~\(i\in\Sigma\)
satisfying~\(\delta_i(x_k)\not \in\{x_1,\ldots,x_n\}\).
It is \emph{with internal exit} if there exist~\(k\) with~\(1\leq k\leq n\)
and~\(i\in\Sigma\) satisfying~\(\delta_i(x_k)\in\{x_1,\ldots,x_n\}\)
and~\(\delta_i(x_k)\neq\delta_{i_k}(x_k)\).
We could say that a cycle is \emph{with exit}
without specifying the nature of the exit. In all other cases, this cycle
is \emph{without exit}. Examples are given in Fig.~\ref{fig:cycles}.

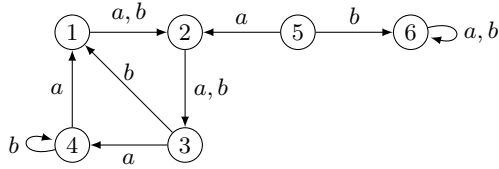
\begin{figure}[h]
\centering
\begin{tikzpicture}[->,>=latex,node distance=1.5cm]
\tikzstyle{every state}=[minimum size=12pt,inner sep=2pt]
\node[state] (1) {\(1\)};
\node[state] (2) [right of=1] {\(2\)};
\node[state] (3) [below of=2] {\(3\)};
\node[state] (4) [below of=1] {\(4\)};
\node[state] (5) [right of=2] {\(5\)};
\node[state] (6) [right of=5] {\(6\)};
\path (1) edge node[above] {\(a,b\)} (2)
      (2) edge node[right] {\(a,b\)} (3)
      (3) edge node[below] {\(a\)} (4)
      (3) edge node[above] {\(b\)} (1)
      (4) edge node[left] {\(a\)} (1)
      (4) edge[loop left] node {\(b\)} (4)
      (5) edge node[above] {\(a\)} (2)
      (5) edge node[above] {\(b\)} (6)
      (6) edge[loop right] node {\(a,b\)} (6);
\end{tikzpicture}
\caption{The cycle
  \(1\xrightarrow{a}2\xrightarrow{a}3\xrightarrow{b}1\) is with
  external exit; the cycle
  \(1\xrightarrow{a}2\xrightarrow{a}3\xrightarrow{a}4\xrightarrow{a}1\)
  is with internal exit; and the cycle \(6\xrightarrow{a}6\) is
  without exit.}\label{fig:cycles}
\end{figure}

Note that the existence of a cycle with internal exit induces the
existence of a (possibly shorter) cycle with external exit. For example
in Fig.~\ref{fig:cycles}, the cycle
\(1\xrightarrow{a}2\xrightarrow{a}3\xrightarrow{a}4\xrightarrow{a}1\)
has two internal exits: \(4\xrightarrow{b} 4\) and~\(3\xrightarrow{b}
1\); the first one leads to the cycle \(4\xrightarrow{b} 4\) with
external exit, while the second one leads to the cycle
\(1\xrightarrow{a}2\xrightarrow{a}3\xrightarrow{b}1\) with external exit.

\medskip

\begin{proposition}[\cite{anto,russ}]
Whenever an automaton~\(\aut{A}\) admits no cycle with exit,
whatever choice is maid for the production functions \(\rho\), the
enriched automaton~\((\aut{A},\rho)\) generates
a finite (semi)group.
\end{proposition}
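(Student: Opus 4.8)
The plan is to prove the statement for the semigroup $\presm{(\aut{A},\rho)}$; the group case then follows either verbatim, with $\mathrm{Sym}(\Sigma)$ replacing the transition monoid, or from fact~(\({\mathbf{F2}}\)). Since $\presm{(\aut{A},\rho)}$ is, as a set of transformations of $\Sigma^*$, exactly $\{\rho_\mot{u}:\mot{u}\in A^+\}$ (composition of generators corresponds to concatenation of words), it suffices to show that this set is finite. First I would translate the hypothesis into structure. Call a state \emph{recurrent} if it lies on some cycle and \emph{transient} otherwise; let $R\subseteq A$ be the recurrent states and $\tau=\#(A\setminus R)$. If $x$ is recurrent, the simple cycle through it has no exit, so $\delta_i(x)$ is independent of $i\in\Sigma$ and equals the successor of $x$ along that cycle; writing $\phi(x)$ for this common value, the map $\phi$ is a permutation of $R$ acting as a cyclic shift on each cycle, whence $\phi^p=\mathrm{id}_R$ for $p$ the lcm of the cycle lengths. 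Moreover, any forward path made of transient states has pairwise distinct vertices (a repetition would create a cycle, making the states recurrent), hence has length at most $\tau$.

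The key step is a synchronisation lemma. Fix $x\in A$ and any word $\mot{s}=i_1i_2\cdots\in\Sigma^*$, and follow the walk $x=s_0$, $s_j=\delta_{i_j}(s_{j-1})$. By the previous paragraph the walk meets $R$ within $\tau$ steps and, once in $R$, proceeds deterministically as $s_{j+1}=\phi(s_j)$ regardless of the letters read. I would then feed this into the cross-diagram computing $\rho_\mot{u}$ for $\mot{u}=x_1\cdots x_n\in A^n$: reading an input word amounts to running $n$ such walks, the $k$-th starting at $x_k$. Each coordinate walk synchronises within $\tau$ steps, so after reading \emph{any} $\tau$ input letters the state of the power automaton $\aut{\MMM}^n$ (with $\aut{\MMM}=(\aut{A},\rho)$) reached from $\mot{u}$ lies in $R^n$.

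On $R^n$ the behaviour is rigid: for $\mot{r}=r_1\cdots r_n\in R^n$ the $k$-th coordinate walk is the letter-independent sequence $r_k,\phi(r_k),\phi^2(r_k),\dots$, so $\rho_\mot{r}$ acts letter by letter, applying at position $j$ the transformation $T_j=\rho_{\phi^{j-1}(r_n)}\circ\cdots\circ\rho_{\phi^{j-1}(r_1)}$; since $\phi^p=\mathrm{id}_R$, the sequence $(T_j)_j$ is periodic of period dividing $p$, with all $T_j$ in the finite transition monoid $M=\langle\rho_x:x\in A\rangle\subseteq\Sigma^\Sigma$. Consequently the minimal transducer of any $\rho_\mot{u}$ splits into a transient part, entered during the first $\tau$ reads and thus having at most $\sum_{i<\tau}\#\Sigma^{\,i}$ states, followed by a letter-wise periodic tail described by one of at most $\#M^{\,p}$ patterns. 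This bound $B$ on the number of states is independent of $n$; as there are only finitely many transducers with at most $B$ states over the fixed alphabet $\Sigma$, the set $\{\rho_\mot{u}\}$ is finite and $\presm{(\aut{A},\rho)}$ is finite.

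I expect the main obstacle to be the synchronisation lemma together with its conversion into a bound uniform in the length $n$ of $\mot{u}$. The delicate point is that \emph{which} cycle a given coordinate locks onto, and with what phase, depends on the letters fed to it and hence on the input; one must check that this dependence is confined to a bounded prefix (the first $\tau$ reads), so that beyond it every coordinate contributes a fixed, periodically varying permutation of the current letter. Making precise that the transient part of the minimal transducer cannot grow with $n$---equivalently, that after $\tau$ reads one is unconditionally inside the letter-wise periodic regime $R^n$---is the crux; the remaining finiteness count is then routine.
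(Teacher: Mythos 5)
The paper gives no proof of this proposition---it is imported verbatim from Antonenko and Russyev via the citation---so there is nothing internal to compare against, and your argument has to be judged on its own terms; it is correct and complete. The structural reduction is right: a state lying on an exit-free cycle has all of its outgoing transitions (one per letter) pointing to the same cyclic successor, so the recurrent set~$R$ carries a permutation~$\phi$ with $\phi^p=\mathrm{id}_R$, and every walk enters~$R$ within $\tau=\#(A\setminus R)$ steps, since a purely transient walk cannot repeat a vertex without creating a transient cycle. The synchronisation step is also sound: in the computation of~$\rho_{\mot{u}}$ for $\mot{u}\in A^n$, the $k$-th coordinate of $\delta_{\mot{s}}(\mot{u})$ is driven by an intermediate word of the same length as~$\mot{s}$, so after \emph{any} $\tau$ input letters the reached state of the $n$-th power lies in~$R^n$, and from~$R^n$ the action is letter-wise: position~$j$ undergoes $T_j=\rho_{\phi^{j-1}(r_n)}\circ\cdots\circ\rho_{\phi^{j-1}(r_1)}$, an element of the finite monoid generated by the~$\rho_x$ inside~$\Sigma^\Sigma$, periodic in~$j$ with period dividing~$p$. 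The worry you raise at the end is not an actual obstacle: you never need to control \emph{which} cycle or phase a coordinate locks onto, only that the residual function of~$\rho_{\mot{u}}$ after each of the finitely many input prefixes of length~$\tau$ belongs to the finite set of letter-wise $p$-periodic maps determined by tuples in~$M^p$; this bounds the minimal transducer of every~$\rho_{\mot{u}}$ uniformly in~$n$ and yields finiteness of~$\presm{(\aut{A},\rho)}$, the group case following from \((\mathbf{F2})\) or by rerunning the same count with permutations in place of arbitrary maps.
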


\medskip

\subsection{A Pumping Lemma for the Reversible Two-State Automata}
It is proved in~\cite[Lemma~10]{Kli13} that in the case of a
reversible Mealy automaton~\(\aut{\MMM}\) with exactly two states, if
some power of~\(\aut{\MMM}\) splits up totally, then all the later
powers of~\(\aut{\MMM}\) split up totally. We can deduce the following 
result which can be seen as a pumping lemma: if the generated
semigroup is infinite, sufficiently long paths can be considered in
the dual automaton to turn indefinitely in a cycle.

\begin{lemma}[Pumping Lemma]\label{prop:suite}
Let \(\aut{\MMM}\) be a reversible Mealy
automaton with two states~\(\{x,y\}\). The automaton~\(\aut{\MMM}\)
generates an infinite semigroup if and only if, for any integer~\(N\in\N\),
there exists a word~\(\mot{u}\in \{x,y\}^*\) of length at least~\(N\) such
that the states~\(\mot{u}x\) and~\(\mot{u}y\) belong to the same
connected component of~\(\aut{\MMM}^{|\mot{u}|+1}\).
\end{lemma}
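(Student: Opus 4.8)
The plan is to read the right-hand condition as a statement about splitting. Saying that for every $N$ there is a word~$\mot{u}$ of length at least~$N$ with $\mot{u}x$ and $\mot{u}y$ in one connected component of~$\aut{\MMM}^{|\mot{u}|+1}$ means exactly that infinitely many powers of~$\aut{\MMM}$ fail to split up totally; its negation says that from some level on every power splits up totally. So I would prove the two matching implications: (a)~if every power past some level splits up totally then $\presm{\aut{\MMM}}$ is finite, and (b)~if infinitely many powers fail to split up totally then $\presm{\aut{\MMM}}$ is infinite. Both are routed through the dual via~\duality: since $\aut{\MMM}$ is reversible each $\delta_i$ is a permutation, the connected components of~$\aut{\MMM}^n$ are the orbits on~$A^n$ of the finite permutation group~$G_n$ generated by the restrictions of the maps~$\delta_i$ to~$A^n$, and $\presm{\aut{\MMM}}$ is finite if and only if $\presm{\dual{\aut{\MMM}}}$ is.

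For~(a) I would compare $G_{n+1}$ with~$G_n$ through the length-$n$ restriction homomorphism $G_{n+1}\to G_n$, which is onto. Total splitting at level~$n$ says precisely that no nontrivial element of~$G_{n+1}$ can fix all of~$A^n$ while exchanging some pair $\mot{u}x,\mot{u}y$; hence this homomorphism is injective and $\#G_{n+1}=\#G_n$. If this holds for all~$n$ past some level, the groups~$G_n$ stabilize, so $\presm{\dual{\aut{\MMM}}}$ is finite and, by~\duality, so is~$\presm{\aut{\MMM}}$. I would stress that boundedness of the component sizes alone would \emph{not} close this step: one really needs the kernel to vanish, and total splitting is exactly what delivers that.

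Direction~(b) is the heart, and here the quoted monotonicity~\cite[Lemma~10]{Kli13} is indispensable. First I would use it to upgrade ``fails to split totally infinitely often'' to ``fails to split totally at every level'': since total splitting, once present, persists at all higher levels, its absence propagates down to all lower levels. Then I would count components. Writing~$r_n$ for the number of connected components of~$\aut{\MMM}^n$, each component has either two children of its own size (it splits) or a single child of double size (it does not), so $r_{n+1}=r_n+a_n$ with~$a_n$ the number of splitting components. Permanent non-splitting forces $a_n=0$ for all~$n$; together with the base case $r_1=1$ --- a reversible two-state automaton that splits at no level must be connected, for otherwise all the~$\delta_i$ are the identity and every power splits --- this gives $r_n\equiv 1$. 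Thus each~$\aut{\MMM}^n$ is connected, its unique component has size~$2^n$, the orbits of~$\presm{\dual{\aut{\MMM}}}$ on~$A^n$ are unbounded, that dual (semi)group is infinite, and \duality returns that~$\presm{\aut{\MMM}}$ is infinite.

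I expect the main obstacle to be exactly the first move of~(b): without the persistence of total splitting one could interleave splitting and non-splitting levels so that the component count keeps growing while every component stays bounded, and no contradiction would arise; it is~\cite[Lemma~10]{Kli13} that rules this out and lets the count collapse to a single, exponentially growing component. The secondary subtlety is the asymmetry noted in~(a), namely that finiteness cannot be read off the component sizes and must instead be extracted from the vanishing of the restriction kernels.
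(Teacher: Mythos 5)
Your argument is correct and supplies precisely the deduction the paper leaves implicit: it combines the structural facts on connected components of powers from Section~2.2, the persistence of total splitting from~\cite[Lemma~10]{Kli13}, and~\duality, which is exactly the intended route. The one sentence to tighten is in part~(b): permanent failure to split up totally only yields \(a_n < r_n\), not \(a_n = 0\) outright, so \(a_n = 0\) must be obtained jointly with \(r_n = 1\) by running the induction from your base case \(r_1 = 1\) (if \(r_n = 1\) and the unique component does not split, then \(a_n = 0\) and hence \(r_{n+1} = 1\)), rather than being asserted before it.
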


\subsection{The Family of Automata with no Cycles with Exit is Maximal
  for Groups}
We prove here that any automaton which admits a cycle with exit
can be enriched in order to generate an infinite group.
We analyze several simple cases in Lemmas~\ref{lm:binary},
\ref{lm:no-return}, and~\ref{lm:ir} which contribute to prove the
general case of Theorem~\ref{th:extension}.

\begin{lemma}\label{lm:binary}
Any automaton over a binary alphabet with a cycle with
external exit can be enriched to generate an infinite group.
\end{lemma}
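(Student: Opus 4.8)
The plan is to reduce the statement to a connectivity property of the \emph{dual} automaton and then to exploit the external exit to manufacture adding-machine behaviour. Since the alphabet is $\Sigma=\{0,1\}$, the only production functions available are the identity and the transposition of $\Sigma$, so making $\rho$ invertible costs nothing. By \semigroup it suffices to produce an enrichment whose semigroup $\presm{(\aut{A},\rho)}$ is infinite, and by \duality this follows once the dual $\dual{(\aut{A},\rho)}$ generates an infinite semigroup. The decisive structural gain is that over a binary alphabet $\dual{(\aut{A},\rho)}$ has exactly two states $\{0,1\}$; choosing $\rho$ invertible makes $(\aut{A},\rho)$ invertible, hence $\dual{(\aut{A},\rho)}$ reversible, so the Pumping Lemma (Lemma~\ref{prop:suite}) applies verbatim. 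The whole proof then reduces to exhibiting, for every $N\in\N$, a word $\mot{u}\in\{0,1\}^*$ of length at least $N$ whose extensions $\mot{u}0$ and $\mot{u}1$ lie in the same connected component of $\dual{(\aut{A},\rho)}^{|\mot{u}|+1}$. Because the dual is reversible its components are strongly connected, so this amounts to finding some input word $\mot{w}\in A^*$ with $\rho_{\mot{w}}(\mot{u}0)=\mot{u}1$; as the production functions are length- and prefix-preserving, this forces $\rho_{\mot{w}}$ to fix $\mot{u}$ while transposing the last letter.

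Next I would describe the enrichment. After a cyclic renaming I may assume the external exit leaves the cycle at $x_1$: writing $i_1,\dots,i_n$ for the labels of the cycle $x_1\xrightarrow{i_1}x_2\to\cdots\to x_1$, the transition $\delta_{i_1}(x_1)=x_2$ stays on the cycle while $\delta_{\overline{i_1}}(x_1)=z\notin\{x_1,\dots,x_n\}$, with $\overline{i_1}$ the unique other letter. I would set $\rho_{x_1}$ to be the transposition and every other production function to be the identity. The point is that reading the cycle label $\mot{p}=i_1\cdots i_n$ drives the active state once around the cycle back to $x_1$, so $\rho_{x_1}$ transforms a block $\mot{p}$ by flipping its first letter and then recurses, whereas reading $\overline{i_1}$ flips that letter and escapes to $z$, terminating the recursion. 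This is exactly the carry mechanism of the adding machine: one checks that the orbit of $\mot{p}^k$ under $\rho_{x_1}$ behaves like a $k$-bit binary counter and so has size $2^{k}$. In particular $\rho_{x_1}$ already has infinite order, and dually the same growth supplies the arbitrarily long non-splitting words $\mot{u}$ demanded by the Pumping Lemma, which is enough to conclude.

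The delicate point—and the step I expect to be the main obstacle—is guaranteeing that this carry propagation is \emph{faithful}: once the computation leaves the cycle through the external exit it must never return to the swapping state $x_1$ and perform a spurious transposition that would collapse the counter. This is precisely where the \emph{external} (rather than merely internal) nature of the exit is used. In the dual formulation the obstacle dissolves, because connectivity of $\mot{u}0$ and $\mot{u}1$ only requires the \emph{existence} of one connecting word $\mot{w}\in A^*$: I may tailor $\mot{w}$ so that the letter-cascade through $\mot{u}$ traverses the cycle-and-exit pattern in a controlled way, using the exit transition $x_1\xrightarrow{\overline{i_1}}z$ to terminate each carry and avoiding the transitions that would re-enter $x_1$, and invoking \prune to discard states irrelevant to the chosen computation. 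Here the binary alphabet is essential, since it makes $\overline{i_1}$ the unique alternative letter and keeps the induced odometer binary; the longer internal-exit, non-returning and non-binary situations are deferred to Lemmas~\ref{lm:no-return} and~\ref{lm:ir} and to Theorem~\ref{th:extension}.
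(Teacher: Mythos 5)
Your overall strategy---pass to the dual via facts (\textbf{F2}) and (\textbf{F3}), observe that over a binary alphabet the dual of the enriched automaton has exactly two states and is reversible, and invoke the Pumping Lemma (Lemma~\ref{prop:suite})---is exactly the paper's. The gap lies in the enrichment you choose and in its verification. You place the transposition on the cycle state~\(x_1\) from which the exit leaves, and the identity everywhere else; this is the adding-machine configuration, and your assertion that ``the orbit of \(\mot{p}^k\) under \(\rho_{x_1}\) behaves like a \(k\)-bit binary counter and so has size \(2^k\)'' is false in general. Take \(\Sigma=\{0,1\}\), a cycle \(x_1\xxrightarrow{0}x_1\) with external exit \(x_1\xxrightarrow{1}z\), and return transitions \(z\xxrightarrow{0}z\) and \(z\xxrightarrow{1}x_1\): with \(\rho_{x_1}\) the transposition and \(\rho_z\) the identity, one computes \(000\mapsto 111\mapsto 010\mapsto 100\mapsto 000\), an orbit of size \(4\), not \(8\). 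The carry is corrupted precisely because the computation re-enters \(x_1\) after exiting; the hypothesis forbidding this (the cycle is not reachable from the exit state) is exactly what Lemma~\ref{lm:no-return} assumes and what the present lemma does not. Your fallback---that ``in the dual formulation the obstacle dissolves'' because you may tailor \(\mot{w}\) so as to avoid the transitions re-entering \(x_1\)---is not a proof: the automaton is deterministic, so once \(\mot{u}\) and a generator are fixed the state trajectory is forced, and you never exhibit the witnesses \(\mot{u}\) and \(\mot{w}\) that the Pumping Lemma requires. The appeal to (\textbf{F1}) to ``discard states irrelevant to the chosen computation'' is also a misuse: pruning removes states unreachable from cycles, not states one would prefer a particular computation to avoid.

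The fix is a different placement of the transposition, and it is the entire content of the paper's proof: put the transposition on the state \(y=\delta_i(x)\notin\aut{C}\) \emph{reached by} the exit, and the identity on every other state, in particular on every state of~\(\aut{C}\). Then, with \(\mot{s}\) the label of \(\aut{C}\) from~\(x\), reading \(\mot{s}^n\) from \(x\) travels around the cycle producing no change and returns to~\(x\); reading \(i\) outputs \(i\) and moves to~\(y\); and the final letter is flipped by~\(\rho_y\). Hence \(\rho_x(\mot{s}^n i\, b)=\mot{s}^n i\, \overline{b}\) for every \(n\), a one-line, one-generator computation that hands the Pumping Lemma the words \(\mot{u}=\mot{s}^n i\) of unbounded length with \(\mot{u}0\) and \(\mot{u}1\) in the same connected component---with no assumption whatsoever on whether \(\aut{C}\) is reachable from~\(y\).
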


\begin{proof}
Let \(\aut{A}\) be an automaton over a binary alphabet~\(\{0,1\}\)
with a cycle~\(\aut{C}\) with external exit as shown in
Fig.~\ref{fig:new-cycle}.

\begin{figure}[h]
\centering
\begin{tikzpicture}[->,>=latex,node distance=1.5cm]
\tikzstyle{every state}=[minimum size=8pt,inner sep=0pt]
\node[state] (x) at (180:1.2) {\(x\)};
\node[state] (x2) at (240:1.2) {};
\node[state] (y) [left of=x] {\(y\)};
\node (C) at (45:1.45) {\(\aut{C}\)};
\path (x) edge node[above] {\(i\)} (y);
\draw[<-,dashed] (170:1.2) arc (170:-110:1.2);
\draw (x.south) arc (190:235:1.2);
\node (i1) at (205:1.4) {\(j\)};
\end{tikzpicture}
\caption{The cycle \(\aut{C}\) is with external exit:
  \(y=\delta_i(x)\), \(y\not\in\aut{C}\).}\label{fig:new-cycle}
\end{figure}
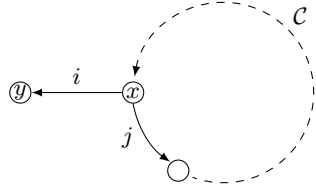

Take the following permutations on the alphabet : \(\rho_y\)
permutes the letters of the alphabet and \(\rho_z\) stabilizes the
alphabet for any other state~\(z\) (in particular for any state
of~\(\aut{C}\)).

Let \(\mot{s}\in\{0,1\}^+\) be the label of~\(\aut{C}\) from~\(x\).
For any~\(n\in\N\), the words~\(\mot{s}^ni0\) and~\(\mot{s}^ni1\)
belong to a same connected
component of~\(\dual{\aut{A},\rho}^{|\mot{s}|+2}\):
\(\rho_x(\mot{s}^ni0) = \mot{s}^ni1\). The Mealy automaton
\(\dual{\aut{A},\rho}\) is reversible and has two states, so we can
apply the Pumping Lemma and conclude on the infiniteness
of~\(\pres{(\aut{A},\rho)}\) by~\semigroup and~\duality.
\qed
\end{proof}

\begin{lemma}[River of no return Lemma]\label{lm:no-return}
Let~\(\aut{A}\) be an automaton and~\(\aut{C}\) a cycle
of~\(\aut{A}\). If \(\aut{C}\) admits an external exit to some state
and is not reachable from this state, then \(\aut{A}\) can be enriched
to generate an infinite group.
\end{lemma}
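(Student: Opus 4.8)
The plan is to avoid the dual altogether and instead exhibit one generator that acts as a binary adding machine, hence has infinite order. After relabelling the states of $\aut{C}$ I assume the external exit issues from its first state, say $\delta_b(x_1)=y$ with $y\notin\aut{C}$ and $\aut{C}$ unreachable from $y$; write $a$ for the label of the cycle-edge leaving $x_1$, so that $\delta_a(x_1)$ is the successor of $x_1$ in $\aut{C}$. Since $\delta_a(x_1)\in\aut{C}$ whereas $\delta_b(x_1)=y\notin\aut{C}$, the two letters $a$ and $b$ are distinct. I then enrich $\aut{A}$ by setting $\rho_{x_1}$ to be the transposition $(a\ b)$ of the alphabet and $\rho_z=\mathrm{id}$ for every other state $z$; all these maps are permutations, so $(\aut{A},\rho)$ is invertible and $\pres{(\aut{A},\rho)}$ is a group.

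The main computation is to describe the action of $\rho_{x_1}$ on the words obtained by concatenating copies of the label $\mot{s}$ of $\aut{C}$ read from $x_1$, viewing the first letter of each length-$|\mot{s}|$ block (the letter processed while the run sits at $x_1$) as a ``bit'' ranging over $\{a,b\}$. One should check that a leading bit $a$ is rewritten to $b$ while the run goes once around $\aut{C}$ and returns to $x_1$ (``carry and continue''), whereas a leading bit $b$ is rewritten to $a$ and the run leaves $\aut{C}$ through $y$; from that moment it can never come back to $x_1$, and as every state other than $x_1$ is inert the whole remaining suffix is reproduced verbatim (``stop''). This is precisely the action of a binary adding machine, so on the $2^n$ block-aligned words carrying $n$ bits $\rho_{x_1}$ is a single $2^n$-cycle. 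Letting $n$ tend to infinity, $\rho_{x_1}$ has infinite order and $\pres{(\aut{A},\rho)}$ is infinite.

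The delicate point --- and the only place the hypothesis enters --- is the ``stop'' clause: I must be sure that once the exit letter has been read the run stays forever off the cycle, so that no further bit is ever altered and the suffix is left untouched. This is precisely what ``$\aut{C}$ is not reachable from $y$'' provides, and it is also why the result genuinely needs the river-of-no-return assumption rather than mere existence of an external exit. The remaining verifications are routine bookkeeping: that the orbit of a block-aligned word stays block-aligned, because on both branches the tail of each block (the letters of $\mot{s}$ after its first) is copied unchanged; and that reading such a tail from $x_1$ does run along $\aut{C}$ back to $x_1$, since the tail consists of the consecutive cycle-edge labels and the intermediate states of $\aut{C}$ act trivially.
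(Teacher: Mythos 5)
Your proposal is correct and is essentially the paper's own proof: the authors likewise take the transposition of the exit letter and the first cycle letter at the exit state, the identity everywhere else, and observe that this state acts as the adding machine on powers of the cycle's label, hence has infinite order. You merely spell out in more detail the carry/stop verification that the paper compresses into the phrase ``as for the adding machine, the orbit of \((j\mot{t})^n\) has size \(2^n\).''
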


\begin{proof}
The idea of this proof is to mimic the adding
machine (see~Fig.~\ref{fig:Mealy-automaton}). Again, Fig.~\ref{fig:new-cycle} illustrates the situation:
\(\aut{C}\) admits an external exit to the state \(y\), the additional
hypothesis being that \(\aut{C}\) is not reachable from~\(y\) (and the alphabet is not supposed binary any longer).

Denote the label of~\(\aut{C}\) from~\(x\) by~\(\mot{s}=j\mot{t}\) with~\(j\in \Sigma\) and~\(\mot{t}\in \Sigma^*\).
We choose the following production functions on the alphabet: \(\rho_x\)
is the transposition of~\(i\) and~\(j\) and \(\rho_z\) is the identity
for any other state~\(z\).

As for the adding machine, the orbit of~\((j\mot{t})^n\) under the
action of~\(\rho_x\) has size~\(2^n\). Therefore the element~\(x\) is of infinite
order and so is the group~\(\pres{(\aut{A},\rho)}\).
\qed
\end{proof}

\begin{lemma}\label{lm:ir}
Any reversible automaton with a cycle with exit can be enriched to
generate an infinite group.
\end{lemma}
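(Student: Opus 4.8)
The plan is to route the whole argument through fact~\justir: an invertible-reversible Mealy automaton that fails to be bireversible necessarily generates an infinite group. Since \(\aut{A}\) is already reversible, any enrichment \((\aut{A},\rho)\) by permutations \(\rho_x\) of \(\Sigma\) is automatically invertible and reversible, so it suffices to choose the \(\rho_x\) so that \((\aut{A},\rho)\) is \emph{not} bireversible and then invoke~\justir. The task thus reduces to translating non-bireversibility into a concrete condition on the transition structure and arranging that condition by hand, which is exactly where the reversibility hypothesis and the presence of an exit will be spent.

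First I would normalize the type of exit. If the given cycle carries only an internal exit, the observation that an internal exit forces a (possibly shorter) cycle with an external exit inside the same automaton lets me replace it; \(\aut{A}\) stays reversible throughout. So I may assume \(\aut{A}\) carries a cycle \(\aut{C}\) with an external exit: there is a state \(x\in\aut{C}\), a letter \(i\), and a target \(y=\delta_i(x)\notin\aut{C}\). Writing \(c\) for the letter of the cycle edge leaving \(x\), one has \(\delta_c(x)\in\aut{C}\), hence \(y\neq\delta_c(x)\).

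Next I would use reversibility to manufacture a collision feeding the exit. Because \(\aut{A}\) is reversible, \(\delta_c\) is a permutation of the stateset, so \(w=\inverse{\delta_c}(y)\) is well defined with \(\delta_c(w)=y\); moreover \(w\neq x\), since \(\delta_c(x)\in\aut{C}\) whereas \(y\notin\aut{C}\). Thus \(x\) and \(w\) are two \emph{distinct} states that both reach \(y\), along the letters \(i\) and \(c\) respectively. I then fix a letter \(j\) and pick the production functions so that \(\rho_x(i)=j\) and \(\rho_w(c)=j\) (each extended arbitrarily to a permutation of \(\Sigma\), the remaining \(\rho_z\) being, say, the identity); this is consistent precisely because \(x\neq w\), and all \(\rho_z\) are permutations, so \((\aut{A},\rho)\) is invertible and reversible.

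It remains to certify non-bireversibility and conclude. Since \((\aut{A},\rho)\) is invertible, its inverse is automatically invertible, so being bireversible reduces to \(\inverse{(\aut{A},\rho)}\) being reversible; unwinding the definition of the inverse automaton, that amounts to the map \(\phi_\ell:z\mapsto\delta_{\inverse{\rho_z}(\ell)}(z)\) being a permutation of the stateset for every letter \(\ell\). With the above choice, \(\phi_j(x)=\delta_i(x)=y=\delta_c(w)=\phi_j(w)\) while \(x\neq w\), so \(\phi_j\) is not injective and \((\aut{A},\rho)\) is not bireversible; \justir then delivers the infinite group. The step I expect to be most delicate is this last identification: correctly reading off from the definition of \(\inverse{(\aut{A},\rho)}\) that its reversibility is governed by the maps \(\phi_\ell\), and verifying that the collision forced by \(y\notin\aut{C}\) genuinely destroys injectivity. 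It is worth flagging why reversibility is indispensable here and why the binary Lemma could not proceed this way: reversibility is exactly what supplies the second preimage \(w\) and what makes \justir applicable at all, whereas in the binary case the automaton is not assumed reversible, forcing the dual-and-pumping route instead.
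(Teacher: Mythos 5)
Your proof is correct and follows essentially the same route as the paper's: enrich the already reversible automaton so that two distinct states feed the same target with the same output letter, observe that the result is invertible-reversible but not bireversible, and conclude by~\justir. You are in fact slightly more careful than the paper, which asserts the existence of the two distinct states $x\neq w$ mapping to a common $z$ from reversibility alone, whereas your argument correctly shows this also uses the cycle-with-exit hypothesis (via the second preimage $w=\inverse{\delta_c}(y)$ with $y\notin\aut{C}$).
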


\begin{proof}
Let \(\aut{A}=(A,\Sigma,\delta)\) be a reversible automaton with a
cycle with exit.

As \(\aut{A}\) is reversible, it admits some states~\(x,
y, z\) with~\(x\neq y\) such that there exist a transition from~\(x\) to~\(z\)
and a transition from~\(y\) to~\(z\). We can choose the
permutations~\(\rho_x\) and~\(\rho_y\) such that these
transitions have the same output and take identity for all the other
permutations.

The enriched automaton \((\aut{A}, \rho)\) is invertible and
reversible but not bireversible. Hence it generates an infinite group by~\justir.
\qed
\end{proof}

The next theorem is the main result of this paper.

\begin{theorem}\label{th:extension}
Any automaton with a cycle with exit can be enriched into an invertible Mealy automaton generating an infinite group.
\end{theorem}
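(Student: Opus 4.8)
The plan is to reduce every instance to one of Lemmas~\ref{lm:binary}, \ref{lm:no-return}, and~\ref{lm:ir} by a structural case analysis on where the exit leads. First I would invoke the observation preceding the Proposition to assume the exit is \emph{external}: an internal exit yields a (possibly shorter) cycle with an external exit, so from now on there is a cycle~\(\aut{C}\) with a transition \(x\xxrightarrow{i}y\), \(y\notin\aut{C}\). The first dichotomy is whether \(\aut{C}\) is reachable from~\(y\). If it is not, this is exactly the hypothesis of the River of no return Lemma, and Lemma~\ref{lm:no-return} finishes the case. Otherwise \(y\) and~\(\aut{C}\) lie in a common strongly connected component~\(\aut{S}\) of the underlying graph.

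I would then analyse~\(\aut{S}\). If some transition leaves~\(\aut{S}\), then, \(\aut{S}\) being a maximal strongly connected component, its target never returns; a cycle of~\(\aut{S}\) through the source of that transition is a cycle with an external exit from which it is not reachable, and Lemma~\ref{lm:no-return} applies again. So I may assume \(\aut{S}\) is closed, i.e.\ every~\(\delta_c\) maps~\(\aut{S}\) into itself, and then split on reversibility of the restriction \(\aut{A}|_{\aut{S}}\). If every \(\delta_c|_{\aut{S}}\) is a permutation of~\(\aut{S}\), then \(\aut{A}|_{\aut{S}}\) is a reversible automaton still carrying \(\aut{C}\) with its exit to \(y\in\aut{S}\setminus\aut{C}\); Lemma~\ref{lm:ir} enriches it to an infinite group, and enriching the remaining states by the identity keeps that infinite group inside \(\pres{(\aut{A},\rho)}\), since \(\aut{S}\) is closed. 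The interesting case is when some \(\delta_c|_{\aut{S}}\) is not injective, hence not a permutation: on the finite set~\(\aut{S}\) its periodic points form a proper, \(\delta_c\)-invariant subset \(Z\subsetneq\aut{S}\), any \(\delta_c\)-cycle inside~\(Z\) is a single-letter cycle, and, \(\aut{S}\) being strongly connected, some edge leaves~\(Z\); its label cannot be~\(c\), so it provides a single-letter cycle with an external exit \(z\xxrightarrow{e}w\) with \(e\neq c\).

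For this single-letter cycle I would replay the proof of Lemma~\ref{lm:binary} inside the two-letter alphabet~\(\{c,e\}\): set \(\rho_w\) to be the transposition~\((c\,e)\) and all other production functions to the identity. Since \(\{c,e\}\) is invariant under every production function, these two letters span a two-state reversible sub-automaton of the dual \(\dual{(\aut{A},\rho)}\), whose semigroup embeds into that of the dual. Writing~\(m\) for the length of the \(\delta_c\)-cycle through~\(z\), reading \(c^{mn}e\) from~\(z\) runs \(n\) times around the cycle (output unchanged), exits to~\(w\), and only there does the transposition act, so \(\rho_z(c^{mn}e\,c)=c^{mn}e\,e\). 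Thus for every~\(N\) the two states \(c^{mn}e\,c\) and \(c^{mn}e\,e\) sit in a single connected component of a high enough power of this sub-automaton. The Pumping Lemma~(\ref{prop:suite}) then forces it to generate an infinite semigroup, whence \(\dual{(\aut{A},\rho)}\) does too, and by~\duality and~\semigroup the group \(\pres{(\aut{A},\rho)}\) is infinite.

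The main obstacle is precisely this last, ``return'' case over a non-binary alphabet: the adding-machine trick of Lemma~\ref{lm:no-return} breaks once the exit feeds back into the cycle, and a naive two-letter restriction fails when traversing the given cycle requires more than two letters. What rescues it is the structural point that a closed, non-reversible strongly connected component must contain a \emph{single-letter} cycle with an external exit; isolating such a cycle is exactly what lets me shrink the active alphabet to two letters and call upon the two-state Pumping Lemma. Verifying that this single-letter cycle always exists (through the periodic points of a non-injective~\(\delta_c\)) and checking that \(\{c,e\}\) stays invariant are the details I expect to be the most delicate.
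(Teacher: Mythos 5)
Your proof is correct, and it reaches the same three lemmas as the paper but through a genuinely different decomposition. The paper first prunes the automaton, dispatches the case of a transition lying on no cycle to Lemma~\ref{lm:no-return}, and then, when every remaining transition lies on a cycle and the automaton is not reversible, finds its single-letter cycle by following the \(i\)-path out of a state with no incoming \(i\)-transition; the hypothesis that every transition is on a cycle is what forces the existence of the \(j\)-labelled exit. You instead organize the analysis around the strongly connected component \(\aut{S}\) containing the exit: non-closedness of \(\aut{S}\) is sent back to Lemma~\ref{lm:no-return}, reversibility of the restriction to Lemma~\ref{lm:ir} (your remark that closedness of \(\aut{S}\) makes \(\pres{(\aut{A}|_{\aut{S}},\rho)}\) a subgroup of \(\pres{(\aut{A},\rho)}\) is exactly the justification needed), and in the remaining case you extract the single-letter cycle from the eventual image \(Z\) of a non-injective \(\delta_c\) rather than from a missing incoming edge. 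Both mechanisms exploit the same non-injectivity, but yours is self-contained and bypasses the pruning reduction. For the final case the paper invokes Lemma~\ref{lm:binary} on the alphabet restriction \((A,\{i,j\},(\delta_i,\delta_j))\) and concludes by a quotient argument on the primal side, whereas you replay the pumping computation inside the two-letter sub-automaton of the dual; these are equivalent, and your verification that \(\{c,e\}\) is invariant under every \(\rho_t\) is precisely what legitimizes the sub-semigroup embedding that plays the role of the paper's quotient map. The two details you flag as delicate (existence of the single-letter cycle with a differently labelled exit, and invariance of \(\{c,e\}\)) both go through: \(Z\) is a nonempty proper \(\delta_c\)-invariant subset of \(\aut{S}\) on which \(\delta_c\) permutes, strong connectivity provides the crossing edge, and its label cannot be \(c\).
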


\begin{proof}
Let \(\aut{A}=(A,\Sigma,\delta)\) be an automaton with a cycle with exit.

By~\prune, we can suppose, without loss of generality,
that~\(\aut{A}\) is pruned.
If there exists a transition not belonging to a cycle, as the
starting state of this transition is reachable from a cycle,
Lemma~\ref{lm:no-return} applies and we are done.

We can assume now that any transition belongs to (at least) one cycle.

If \(\aut{A}\) is reversible, it can be enriched to generate an
infinite group by Lemma~\ref{lm:ir}.

\newcommand{\zzz}{y}
\newcommand{\yyy}{x'}
\newcommand{\ttt}{z}

\begin{figure}[h]
\centering
\begin{tikzpicture}[->,>=latex,node distance=1.8cm]
\tikzstyle{every state}=[minimum size=12pt,inner sep=2pt]
\node[state] (x) {\(x\)};
\node[state] (1) [right of=x] {};
\node[state] (2) [right of=1] {$\yyy$};
\node[state] (3) [right of=2] {$\zzz$};
\node (4) [right of=3] {};
\node[state] (5) [right of=4] {};
\path (x) edge node[above]{\(i\)} (1)
      (1) edge[dashed] (2)
      (2) edge node[above] {\(i\)} (3)
      (3) edge[dashed,-] (5)
      (4) edge node[above] {\(i\)} (5)
      (5) edge[bend left] node[above] {\(i\)} (3);
\end{tikzpicture}
\caption{Path from \(x\) with all transitions labelled
  by~\(i\).}\label{fig:icycle}
\end{figure}
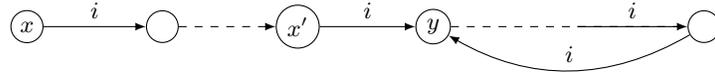

We can suppose now that \(\aut{A}\) is not reversible: there exist a state~\(x\)
and a letter~\(i\) such that \(x\) has no incoming transition labelled
by~\(i\).
Consider the path starting at~\(x\) with all
transitions labelled by~\(i\) as shown in Fig.~\ref{fig:icycle}.
This path loops on some state~\(\zzz\not=x\).
Denote by~\(\aut{C}\) the resulting cycle.
We have~\(\zzz=\delta_{i^{n+k\#\aut{C}}}(x)\)
for some minimal~\(n>0\) and for all~\(k\geq 0\).
Now let~\(\yyy\) denote the state~\(\delta_{i^{n-1}}(x)\): we have~\(\yyy\not\in\aut{C}\).

The transition~\(\yyy\xxrightarrow{i}\zzz\) belongs to some cycle 
by hypothesis and this cycle is not~\(\aut{C}\) by construction.
Therefore~\(\aut{C}\) admits an external exit~\(\zzz'\xxrightarrow{j}\zzz''\)
(with~\(j\not=i\)). Hence \(\yyy\) is reachable from~\(\zzz\) and so from~\(\aut{C}\),
by hypothesis, but does not belong to~\(\aut{C}\), by
construction. The automaton~\(\aut{B}=(A, \{i,j\}, (\delta_i,
\delta_j))\) contains the cycle~\(\aut{C}\) and the transition
\(\zzz'\xxrightarrow{j}\zzz''\). So \(\aut{B}\)
can be enriched to generate an infinite group
by~Lemma~\ref{lm:binary}, say
with~\(\rho=(\rho_\ttt:\{i,j\}\to\{i,j\})_{\ttt\in A}\). This group is a
quotient of any group obtained by completing each~\(\rho_\ttt\)
from~\(\{i,j\}\) into~\(\Sigma\), and we can conclude.
\qed
\end{proof}

\subsection*{Acknowledgments} The authors would like to thank Jean
Mairesse who has detected a serious gap in a previous version of this
paper.

\bibliographystyle{splncs_srt}
\bibliography{./extension}

\end{document}